\title{How Much Traffic is Searching for Parking? Simulating Curbside Parking as a Network of Finite Capacity Queues}
\author{Chase Dowling, Tanner Fiez, Lillian Ratliff, and Baosen Zhang
\thanks{The authors have been supported in part by NSF grants CNS-1646912 and CNS-1634136. C. Dowling was also supported in part by the Washington Clean Energy Institute.}
\thanks{C. Dowling, T. Fiez, L. Ratliff and B. Zhang are with the Department of Electrical Engineering, University of Washington,
	    Seattle, WA 98195, USA
	    Emails: \{cdowling,fiezt,ratliffl,zhangbao\}@uw.edu}}
\begin{document}

\maketitle
\thispagestyle{empty}
\pagestyle{empty}

\begin{abstract}
With the increasing availability of transaction data collected by digital parking meters, paid curbside parking can be advantageously modeled as a network of interdependent queues. In this article we introduce methods for analyzing a special class of networks of finite capacity queues,  where tasks arrive from an exogenous source, join the queue if there is an available server or are rejected and move to another queue in search of service according to the network topology. Such networks can be useful for modeling curbside parking since queues in the network perform the same function and drivers searching for an available server are under combinatorial constraints and jockeying is not instantaneous. Further, we provide a motivating example for such networks of finite capacity queues in the context of drivers searching for parking in the neighborhood of Belltown in Seattle, Washington, USA. Lastly, since the stationary distribution of such networks used to model parking are difficult to satisfactorily characterize, we also introduce a simulation tool for the purpose of testing the assumptions made to estimate interesting performance metrics. Our results suggest that a Markovian relaxation of the problem when solving for the mean rate metrics is comparable to deterministic service times reflective of a driver's tendency to park for the maximum allowable time.
\end{abstract}

\section{Introduction}
Since the advent of digital parking meters, cities have stockpiled a growing record of parking transaction data within their CBD. Transaction data provides engineers with a means of estimating the arrival rate of drivers which \emph{attains} the observed occupancy level. We can combine this rate information with a queue-theoretic model of downtown parking, necessitating an evaluation of sufficient conditions as well as assumptions made when solving for the stationary distribution. Consider a block-face of curbside parking spaces (see Fig.\ref{fig:blockface}): this represents a finite capacity queue with no buffer. Drivers that arrive in search of parking that find all spaces occupied must move onto an adjacent block-face. This search dynamic driven by the rate of drivers turned away from a full block-face is representative of the impact of drivers searching for parking, and hence curb-side parking impact on through-traffic.

This model has been recently introduced in \cite{dowling2017optimizing}, but a number of assumptions are made. The primary contributions of this work are 1) sufficient conditions for the number users in such a system not growing unboundedly, 2) simulated analysis of rates of convergence to an apparent steady-state, and 3) simulating a real-world network of curbside parking to evaluate these assumptions.

The rest of this paper is organized as follows: we provide background, notational preliminaries, and information about our simulator architecture in Sec.~\ref{sec:background}, state our results on sufficient conditions (Sec.~\ref{sec:results}), and steady-state convergence rates (Sec.~\ref{sec:convgrate}). In Sec.~\ref{sec:application} we set up our use case example: providing information on data sources (Sec.~\ref{sec:data}) and simulation results (Sec.~\ref{sec:simapp}). We make our concluding remarks in Sec.~\ref{sec:conc}.

\section{Background}
\label{sec:background}
\begin{figure}
    \centering
    \includegraphics[width=0.9\columnwidth]{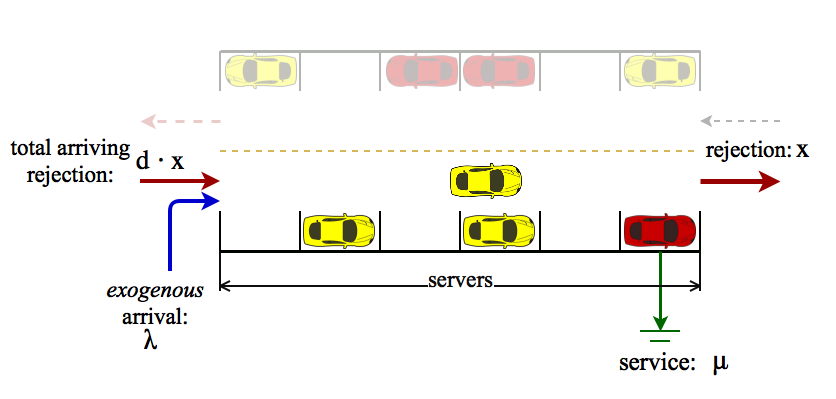}
    \caption{A block-face of curbside parking represented as a finite capacity queue.}
    \label{fig:blockface}
\end{figure}

Canonical queueing networks like Jackson \cite{jackson2004jobshop} or (in general) BCMP networks \cite{baskett1975open} operate under a regime where tasks join the network at some queue, are served, and then move onto the next queue according to the network topology or exit according to some probability. Characterized by mild conditions on the distributions of their arrival and service rates, the state spaces of BCMP networks are seperable, greatly improving the tractability of their analysis.

We consider a new service regime where a network of queues has some exogenous arrival process and if a task arrives at a queue without an available server, the task searches according to a network topology, but requirements for separability via the BCMP theorem are not met (e.g., general instead of negative exponential service time distributions) due to physical drivers in a system. Once the task is served at a single queue somewhere in the network, it immediately exits the network after service.

Parking in the core business districts (CBD) of cities is naturally amenable to analysis via such networks of finite capacity queues with the growing availability of parking transaction data: parking queue tasks are vehicles in need of a space to park and servers are those spaces. Service time distributions can be characterized by the length of paid parking time. And the finite capacity of curbside and garage or lot parking is all too apparent to the individual driver in search of a space when supply is scarce. Further, in transportation literature, \emph{cruising for parking} also results from drivers searching for an available curbside parking space to avoid garage prices \cite{arnott2006spatial}. In the aggregate this behavior creates potentially significant congestion \cite{shoup:2005aa}, but city planners have until recently lacked high resolution (block-face by block-face, per hour) models of such costs \cite{ratliff:2016aa}. Congestion caused by drivers cruising for parking is non-trivial, historically cited as composing up to 30\% of through-traffic \cite{inci:2015aa,shoup:2006aa}. 


Supposing vehicles enter a network of parking queues in an effort to find a space, park, and then exit the network after some amount of time, queueing networks with seperable state-spaces may not be suited to describing the state space of these parking queues; this is supported by evidence of the probabilistic dependence of adjacent block-faces of curbside parking \cite{fiez:2017aa} (i.e. a block-face of curbside parking that is full is unlikely to be adjacent to an empty block-face), in addition to a number of other factors we will describe. We are able to address this, however, through the use of some relaxing assumptions and we verify these assumptions via simulation. The purpose of this paper is to investigate the validity of these assumptions.

\subsection{Related Work: Queueing Theory}

BCMP networks with finite queueing capacities have been analyzed by incorporating some blocking probability at each queue, or by allowing tasks to be dropped once the capacity of the queue is reached \cite{balsamo2003review,balsamo2011queueing}. In some contexts like parking, this may be an unreasonable assumption. Consider the case of parking spaces in a CBD: drivers can neither a) be held in place by some blocking protocol or b) disappear from the network while in search of a parking space either curbside or garage. A vehicle, or queue task, constantly impacts the performance of the system.

Intuitively, the queuing regime we are interested in is more akin to jockeying than current research in networks of finite capacity queues. In typical jockeying problems, tasks switch between queues or servers based on a jockeying strategy (e.g. probabilistic or rule-based strategies) with the motivation, in practice, being a shorter sojourn time \cite{koenigsberg1966jockeying}. In our motivating case, drivers are forced to search between queues until an available server is found but in a combinatorially constrained fashion---drivers may only search a limited set of block-faces with each trial based on the connectivity of the network.

\subsection{Related Work: Parking}

Canonical models for parking tend to assume a degree of homogeneity \cite{douglas:1975aa, arnott2006spatial, arnott:2009aa}, but this limitation was largely a function of the availability of data on curbside parking occupancy: namely the proportion of spaces in use at any given time. This data has traditionally be collected by manually \cite{inci:2015aa, shoup:2007aa}, but the introduction of digital parking meters has provided researchers with an opportunity to increase the spatial and temporal resolution of CBD parking models. Indeed, a growing body of work is beginning to make use of parking transaction data as a means to estimate curbside parking occupancy \cite{yang2017turning,fiez:2017aa} in addition to investigating price elasticity of demand \cite{pierce2013getting}. In \cite{dowling2017optimizing}, a queueing network model designed specifically to take advantage of these new sources of data is initially introduced, relating occupancy to price, but a number of assumptions are made.

Queues are not new to traffic engineers: queues have been used to analyze the flow of traffic along a roadway \cite{papageorgiou2003review} or through a signalized intersection \cite{newell1965approximation}. In an attempt to capture the parking-congestion relationship, several approaches based on queuing theory have been previously introduced~\cite{bender1997simulating,klappenecker2014finding,arnott:1999aa,portilla:2009aa,larson:2010aa,ratliff:2016aa} where roads (or segments), parking spaces, or both are modeled as queues.

Previous applications of queuing theory to curbside parking have been focused on investigating the short-term impact on through-traffic or an intersection due to drivers maneuvering into a parking space \cite{portilla:2009aa, cao2015generalized}. To be clear, this work is interested in longer, steady-state analysis of curbside parking resource performance and its impact on expected traffic volumes; long-term performance metrics like occupancy drive policy decisions like price and maximum parking time. Steady-state analysis of garage or lot parking modeled as queues has also recently begun to appear, but these are treated as a single queue with many servers \cite{caliskan2007predicting}, and congestion resulting from finite supply is not considered.

\subsection{Preliminaries}
A queue, or a vertex $i \in V$ is characterized by an \emph{exogenous} arrival rate $\lambda_i$, a service rate $\mu_i$, the number of servers $k_i$, and maximum number of tasks in the queue $n_i$. We assume that the \emph{exogenous} arrival process is Poisson~(independent between queues) and the services times are generally IID like conventional $M/GI/\cdot/\cdot$ queues~\cite{wolff1989stochastic}, however, unlike conventional queueing networks or even traditional finite capacity queue networks where tasks are buffered or blocked at or by individual queues, the \emph{network edges} themselves form queues with some pre-determined travel (service) time. For simplicity, we assume network edge queues are infinite server, first-come first-served (FCFS) queues with fixed travel time since most of our application focus area of Belltown is made up of uncongested side streets. More realistic queue models reflective of the traffic state could be used. At each node that is reached by a task, tasks assess whether a server is available or continue to search.

The key difference between the proposed queue network and conventional networks--such as a Jackson network~\cite{jackson1957networks}--is that tasks proceed to other queues after they are rejected rather than served. Since the rejection of a queue with Poisson arrivals and exponential service times is not Poisson, characterizing the stationary distribution of this network of queues is difficult because the distribution of total arrival rate itself to any vertex queue is unknown. Further, the service rates at these queues are generally distributed, failing to meet criteria for separability according to the BCMP theorem. As will be clear in our application, negative exponentially distributed service times at each finite capacity queue is likely too strong an assumption (see Fig.~\ref{fig:servicedist} for a distribution of paid parking transaction times).

In Kendall's notation, $M/GI/k/n$ queues have Poisson arrival, generally distributed but independent service times (M for exponential, D for deterministic), $k$ servers and $n - k$, $n \geq k$, spaces available for tasks to queue. First we'll examine networks of such queues in a uniform, symmetric network. We'll then extrapolate this result to trees. Lastly we'll consider the analysis of general networks, focusing on the $M/M/k/k$ and $M/D/k/k$ service regimes.

\subsubsection{Stationary Distribution of a Single $M/M/k/k$ Queue}
Here we introduce how a \emph{single queue} with exponential service times with finite servers can be analyzed given occupancy data, calling on this later in the paper extending the analysis to a network of queues. To help avoid confusion between exogenous arrivals~(from outside of the network, denoted by $\lambda$) and endogenous arrivals~(rate of rejection from neighboring queues, denoted by $x$), we use  $y = \lambda + x$ as the total arrival rate to a queue. Suppose the service rate~(inverse length of parking time) of each server is $\frac{1}{\mu}$ and there are $k$ servers~($k$ parking spots) in total. Let $\pi_i$ be the stationary probability that $i$ servers are busy~($i$ cars are parked), for $i=0,\dots,k$. Let $\boldsymbol{\pi}=[\pi_0 \; \dots \pi_k]$. For this single queue, we can explicitly write down its stationary probability distribution via the transition rate matrix:

\begin{equation*}\label{eqn:transition}
   \bd Q= \bma -y & y & 0 & \cdots & 0\\
   \mu & -(\mu+y) & y & \cdots & 0\\
   \vdots & & \ddots & & \vdots \\
   0 & \cdots & (k-1)\mu & -((k-1)\mu + y) & y \\
   0 & 0 & \cdots & k\mu & -k\mu \ebma,
\end{equation*}
and $\boldsymbol{\pi}$ is the unique solution to
\begin{equation}\label{eqn:stationary}
\boldsymbol{\pi Q} = \boldsymbol{0}
\end{equation}
such that $\sum \pi_i =1$. Let $\rho = \frac{y}{\mu}$. By standard calculations~\cite{wolff1989stochastic},
\begin{equation}\label{eqn:statvalue}
\boldsymbol{\pi} = \pi_0 \cdot \left[ 1, \rho, \cdots, \frac{\rho^{k}}{k!} \right]
\end{equation}
where $\pi_0 = [\sum_{j = 0}^{k} \frac{\rho^{j}}{j!}]^{-1}$. Using Little's Law, the occupancy $u$, or the proportion of busy servers at any given time can be expressed as,
\begin{equation}\label{eqn:little}
u = \frac{y}{k\mu}\left( 1 - \pi_0 \frac{\rho^{k}}{k!} \right)
\end{equation}.

Little's Law does not depend on the distributions of the service time and arrival process; as we've used it in Eqn.~\eqref{eqn:little}, we merely need to be able to estimate the probability the block-face is full.

Note that $(1 - \pi_0 \frac{y^{k}}{k!})$ is the probability that \emph{at least} one space is available. Consider, if drivers are unable to wait for an available server at a particular block, in order to obtain occupancies approaching 100\%, cars would need to arrive at an infinite rate in order to immediately replace vehicles exiting service.

A block-face queue is therefore rejecting incoming vehicles at a rate of $y\cdot\pi_k$. The difficulty therein lies with estimating these total arrival rates, because no two adjacent block-faces are independent as long as travel time between queues is finite.

\subsection{Simulator Architecture}
\label{sec:simulator}

Our simulator is written in Python and is freely available to download and test at \url{github.com/cpatdowling/net-queue}\footnote{Our experiments utilize GNU Parallel \cite{Tange2011a}.}. Requirements and basic instructions, as well as data and relevant parameters used in this paper are included in the repository. The simulator constructs a network of block-face (drivers in service/parked) and street (drivers moving from one block-face to the next) queues linked according to the true street topology. Our simulator is validated against occupancy data provided by the Seattle Department of Transportation (SDOT).

In line with our model design the simulator treats streets and block-faces individually: once a driver reaches the end of their drive time on a street, they ``immediately check'' the entire block-face they've arrived at for availability. If no parking is available, the driver chooses a new destination uniformly at random (though more realistic search strategies can be applied \cite{hampshire:2016aa}) based on the block-faces currently accessible to them according to the street topology, joining a street queue with some driving service time associated with it.

The input parameters of our simulator include:
\begin{enumerate}
\item \textbf{Network Topology.} For every block-face in Belltown, there will be any number of block-faces a driver can reach using only legal maneuvers on one- and two-way streets, excluding legal U-turns\footnote{Our data and roadway maps currently provide no principled. The simulator is given a map of block-face connectivity transcribed from Google Maps}.

\item \textbf{Service Rate}: The inter-service time dictates how long cars will spend parked on a block-face. Fig.~\ref{fig:servicedist} illustrates the distribution of \emph{paid} parking times across Belltown, between block-faces set for 2 and 4 hour maximum parking times. In simulation, this distribution can be set as exponential, deterministic, or uniquely determined by the distribution of paid transaction times exhibited at the block-face level. At this point we have no reliable data on the frequency of illegally parked vehicles that have either paid, or overstayed, and further we have no means of measuring how early drivers typically leave before their paid time expires. In our initial simulations, we assume everyone parks legally and early/late departures balance out; the latter assumption is not unfounded and studied at length in \cite{yang2017turning}.
\item \textbf{Number of Servers}: The number of parking spaces, or the number of servers in the block-face queue, are extracted directly from data for each block-face, ranging in values according to Fig.~\ref{fig:supplydist}. In our data, it is sometimes the case that there may be higher than 100\% occupancy at any given block-face as a result of factors described in Sec.~\ref{sec:data}. In these cases we assume that occupancy is 100\% with respect to the estimated number of \emph{spaces}, and not with respect to the number of vehicles currently in service.
\item \textbf{Exogenous Arrival Rate}: The simulator accepts a mean parameter for an exponential random inter-arrival time distribution, simulating vehicles arriving at a specific block-face to begin their search for parking. If a space is available at the block-face they originally arrive at, the driver accepts the first space without contributing to congestion. 
\item \textbf{Drive Time}: Drivers arrive at a block-face and determine if any spaces are available. If no spaces are available, a drive time is specified to determine how long it takes drivers to reach the next adjacent block-face in their search.
\end{enumerate}

Important output values of our simulator include:
\begin{enumerate}[noitemsep,nolistsep]
\item \textbf{Traffic due to Parking (rate of rejections)}: Traffic due to drivers searching for parking can be measured as the total number of rejections at a particular block-face (or road, if search strategy is non-uniform) or as rejections per unit time.
\item \textbf{Average Wait}: The amount of time a driver spends looking for parking, as a function of drive time between each block-face a driver is
    rejected from until they find parking.
\item \textbf{Occupancy}: The resulting occupancy measures the average number of servers or spaces along a block-face in use at any given time. This value is compared against true occupancy data to ensure the simulator is providing accurate estimates of congestion and sojourn times.
\end{enumerate}

\section{Results}
\label{sec:results}
Here we provide some analytical results on networks of finite capacity queues.

Each vertex in the network is a finite-capacity, multi-server queue. The queues are connected as \emph{vertices on a directed graph}. We use conventional notations $G=(V,E)$ to describe this digraph. Assumptions regarding connectivity are made accordingly in our results. Each vertex $i$ has some exogenous arrival rate $\lambda_i$ and service time $\mu_i$. When a task is rejected from a queue at vertex $i$, it transits along edge $(i,j)$, where $j$ is connected to $i$, denoted $j \sim i$. . 

For simplicity, we consider this edge to be an infinite server queue with a fixed travel time $d$ shared across all edges. As we will see, if a travel time is not imposed along tasks transiting between $(i,j)$ then the combinatorial search constraints modeled by the connectivity of the graph is ill-defined.

To gain some intuition into how this system behaves, we first state a lemma.

\begin{definition}
A network of finite capacity queues ``communicate'' if a queue at vertex $i$ is reachable by a task from queue $j, \forall i, j \in V$.
\end{definition}

\begin{lemma}\label{lemma:capacity}
Given travel time $d > 0$, and all queues in the network $G$ communicate, if 

\begin{equation}\label{eqn:capcond}
\sum_{i} \lambda_i < \sum_{i} \mu_i,
\end{equation}

for $M_i/GI_i/k_i/k_i$ queues, then the number of tasks in the system does not grow unboundedly.
\end{lemma}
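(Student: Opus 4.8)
The plan is to prove boundedness by a drift (Foster–Lyapunov-type) argument on the number of tasks \emph{in transit}, after first disposing of the tasks in service. First I would decompose the total population as $N(t) = B(t) + M(t)$, where $B(t)$ is the number of tasks currently in service (parked) and $M(t)$ is the number in transit on the edge-queues. Because every vertex queue is of loss type $M_i/GI_i/k_i/k_i$ with no buffer, we always have $B(t) \le K := \sum_i k_i < \infty$. Hence $N(t)$ can grow unboundedly only if $M(t)$ does, so it suffices to bound the transit population.

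Second, I would track the mean drift of the population. Tasks enter the network only through the exogenous streams, at total rate $\Lambda = \sum_i \lambda_i$, and leave only upon service completion, at instantaneous rate $S(t) = \sum_i b_i(t)\mu_i$, where $b_i(t) \in \{0,\dots,k_i\}$ is the number of busy servers at vertex $i$. Since arrivals and service completions are the only events that change $N$, we have $\tfrac{d}{dt}E[N(t)] = \Lambda - E[S(t)]$, so the population decreases in mean whenever $S(t) > \Lambda$. The crux is therefore to show that a large transit backlog forces the network to be busy enough that $S(t) > \Lambda$.

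The step I expect to be the main obstacle is exactly this link between congestion and occupancy, and it is where both hypotheses $d>0$ and communication enter. The intuition is that an idle server cannot persist while many tasks are circulating: since all queues communicate, every vertex is reachable and hence probed by passing tasks, and any task that probes a vertex with a free server is admitted immediately; with a fixed travel time $d>0$ spreading the circulating tasks out in continuous time, probes reach each vertex frequently, so idle servers are filled. Making this precise is delicate because the arrival stream to each queue is neither Poisson nor independent—rejections are delayed by $d$ and correlated through the topology—so the single-queue transition-matrix analysis behind Eqn.~\eqref{eqn:stationary} does not transfer. I would instead lower-bound the saturated service rate: once $M(t)$ exceeds a threshold, every communicating queue carries at least one busy server, giving $S(t) \ge \sum_i \mu_i$, and the hypothesis $\sum_i\lambda_i < \sum_i\mu_i$ then yields a strictly negative drift.

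Finally, I would convert this negative drift outside a bounded set into the stated conclusion. This is precisely a Foster–Lyapunov stability condition on $M(t)$, ruling out escape to infinity; equivalently, applying Little's law to the infinite-server edge subsystem gives $E[M] = d\cdot\bar R$ with $\bar R$ the long-run rejection rate, and the drift bound forces $\bar R$, and hence $E[M]$, to be finite. Because the service times are general ($GI$ rather than exponential), I would deliberately phrase this last step through these distribution-free rate relations rather than through a Markov chain on the occupancy vector, so the argument does not rely on exponentiality. The role of $d>0$ is essential throughout: if $d=0$ a rejected task would retry instantaneously and the transit backlog—indeed the very notion of a search population—would be ill-defined, as noted before the statement.
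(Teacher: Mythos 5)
Your opening move---decomposing $N(t)=B(t)+M(t)$, noting $B(t)\le\sum_i k_i$ because the vertex queues are loss queues, and reducing the lemma to bounding the transit population on the edges---is exactly the paper's first step. From there, however, you diverge completely: the paper does \emph{not} run a drift argument. It sets the travel time $d=\delta$ arbitrarily small, uses the fact that a communicating network has cover time $O(n^3)$ to bound the worst-case traversal time by $\epsilon=\delta O(n^3)$, and argues that for small $\epsilon$ the whole network collapses into a single bulk $\boldsymbol{M}/\boldsymbol{GI}/\sum_i k_i/\infty$ queue fed by the superposed Poisson stream of rate $\Lambda=\sum_i\lambda_i$, whose stability under $\Lambda<\sum_i\mu_i$ is then cited from standard single-queue theory. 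Your approach keeps $d$ fixed and tries to work with the actual networked dynamics, which is in principle more faithful to the statement as given ($d>0$ arbitrary, not arbitrarily small).

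The problem is that your crux step is asserted rather than proved, and as stated it is false. The claim ``once $M(t)$ exceeds a threshold, every communicating queue carries at least one busy server, giving $S(t)\ge\sum_i\mu_i$'' does not hold pathwise: however large the transit backlog, a particular queue can complete all of its services and then sit idle for up to a travel time $d$ (or longer, on sample paths where the randomly routed probes happen to miss it), so no deterministic threshold on $M(t)$ forces $S(t)\ge\sum_i\mu_i$. What your drift argument actually requires is the conditional statement $E[S(t)\mid M(t)=m]>\Lambda$ for all large $m$, i.e., a bound on server idleness given congestion---and establishing that is precisely the hard dependence problem the paper flags (the rejection streams feeding each queue are non-Poisson and correlated through the topology), which your sketch does not resolve. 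Your fallback closing via Little's law is also circular: $E[M]=d\bar R$ lets you conclude $E[M]<\infty$ only if you already know the long-run rejection rate $\bar R$ is finite, which is essentially the stability you are trying to prove; and a genuine Foster--Lyapunov argument would need a Markovian state descriptor (residual $GI$ service times and positions of tasks in transit), not the scalar $M(t)$ alone. So the skeleton is a legitimate alternative route, but the load-bearing step is missing.
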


\begin{proof}
Since each vertex queue has finite capacity, we only need to show that the number of tasks in the infinite capacity edge queues does not grow unboundedly.

Let $d = \delta$ be arbitrarily small. Since the network communicates, the cover time (the expected number of steps before a task reaches every vertex in the graph) is upper bounded by $O(n^3)$ \cite{lovasz1993random} in the number of vertices $n$. The total, worst-case travel time to traverse the network $\epsilon = \delta O(n^3)$.

For small $\epsilon$, the queue network $G$ becomes a bulk infinite capacity queue $\boldsymbol{M}/\boldsymbol{GI}/\sum_{i} k_i/ \infty$, where the total arrival rate $\Lambda$ is the sum of Poisson processes parameterized by rates $\lambda_i$. Since, 

\begin{equation}
\Lambda < \sum_{i} \mu_i
\end{equation}

then by standard conditions \cite{kingman2009first}, this bulk queue $\boldsymbol{M}/\boldsymbol{GI}/\sum_{i} k_i/ \infty$ is stable, and therefore the number of tasks in the system do not grow unboundedly.
\end{proof}


In other words, service capacity is greater than the total exogenous arrival rate. We can further intuit the implications of this lemma. Consider a two-node finite capacity queue network with arbitrarily short travel times. During periods in which both queues are busy, new exogenous arrivals are rejected and traverse back and forth between the two queues asymptotically quickly at a rate proportional to $1/2d$. Lemma \ref{lemma:capacity} does not imply that the \emph{rate} of rejection---a primary quantity of interest---from a queue in the network is bounded. The observation that rejection rates grow asymptotically is valuable in the context of our application in Sec.~\ref{sec:application}.

\subsection{Two-node network}

We can attempt to solve for the stationary distribution of a completely connected, two-node network with a single server at each queue (see Fig.~\ref{fig:twonodenet}). This infinite state space for the single server case is represented in Fig.~\ref{fig:statespace}. Since the exact stationary distribution of the network is difficult to characterize, we instead turn to understanding the behavior of the \emph{mean performance metrics} of the network.

\begin{figure}
    \centering
    \includegraphics[width=0.9\columnwidth]{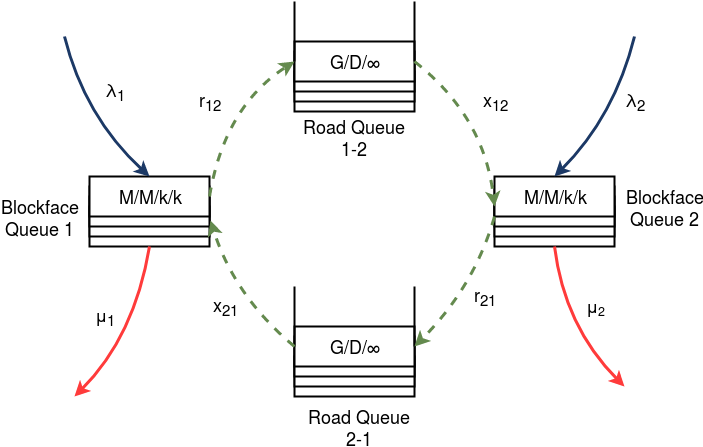}
    \caption{2 node network}
    \label{fig:twonodenet}
\end{figure}

\begin{figure}
    \centering
    \includegraphics[width=0.9\columnwidth]{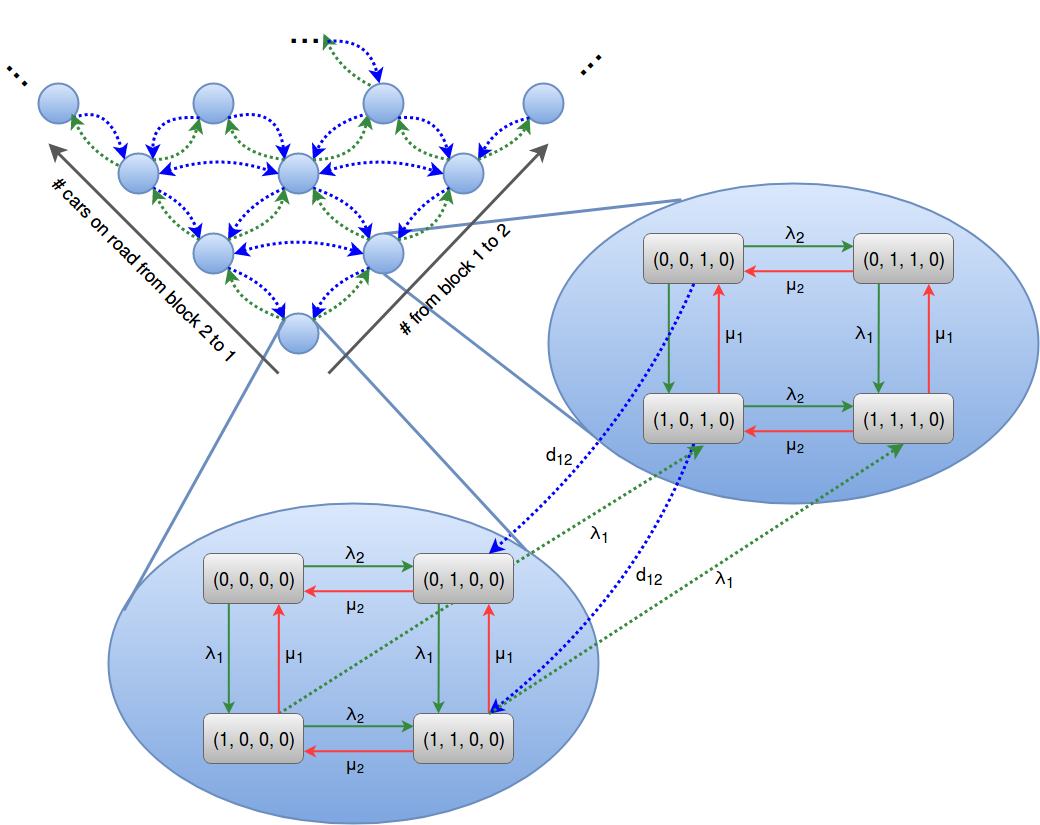}
    \caption{2 node network state space}
    \label{fig:statespace}
\end{figure}

If we define the rate as the number of vehicles that passes through a point over a unit period of time, then at steady state, the input rate and output rate of a road has to be the same. The output never more than the input since edges in the network do not have exogenous arrivals, and if the output is less than the input vehicles are queued up in the road. This allows us to restrict our vision the state-space described by the root node in Fig.~\ref{fig:statespace} and relax the problem to a Markovian setting. This network can then be represented by the rate transition matrix $Q$ as, 

\footnotesize
\begin{equation}
  \bma -\sum_{j \neq 1} Q_{1,j} & \la_1+x_{12} & \la_2+x_{21} & 0 \\
  \mu_1 & -\sum_{j \neq 2} Q_{2,j} & 0 & \la_2+x_{21} \\
  \mu_2 & 0 & -\sum_{j \neq 3} Q_{3,j} & \la_1+x_{12} \\
  0 & \mu_2 & \mu_1 & -\sum_{j \neq 4} Q_{4,j}
  \ebma.
\end{equation}
\normalsize

The stationary probabilities $\mathbb{\pi}$ are the solution to,

\begin{equation}
\mathbb{\pi}Q = 0.
\end{equation}

Define $P_1 = \pi_1 + \pi_2$ and $P_2 = \pi_2 + \pi_3$ to be the probabilities that queues 1 and 2 in the network are full. Suppose both roads have the same travel time $d$, and let $W_1$ and $W_2$ be the sojourn time of a vehicle that enters node 1 and node 2 first, respectively. We have

\begin{subequations}
  \begin{align}
    W_1 &= P_1^1 (d+ W_2) \\
    W_2 &= P_2^1 (d+ W_1),
  \end{align}
\end{subequations}
and solving we observe that,
\begin{equation}
  W_1= \frac{P_1(1+P_2)}{1-P_1 P_2} \;\; W_2=\frac{P_2(1+P_1)}{1-P_1 P_2}
\end{equation}
and the mean sojourn time assuming equal arrival probability is $1/2(W_1+W_2)$.

In the symmetrical case with $\la_1=\la_2=\la$, $\mu_1=\mu_2=\mu$, we can explicitly calculate $\bm{\pi}$. Normalizing such that $\la=1$, and assuming $\mu>1$ for stability, we have
\begin{equation}
  \bm{\pi} = \begin{bmatrix} \frac{(\mu-1)^2}{\mu^2} & \frac{\mu-1}{\mu^2} & \frac{\mu-1}{\mu^2} & \frac{1}{\mu^2} \end{bmatrix}
\end{equation}
and $x_{12}=x_{21}=\frac{1}{\mu-1}$, with sojourn time $\frac{\mu}{\mu^2-1}d$.

\subsection{Symmetric networks}

Many urban centers have fairly uniform street topologies where the streets from a regular graph. In this section we make the assumption that the queueing network is entirely uniform: the topology is a $d$-regular graph, all block-faces have the same number of servers with the same service rate $\mu$, and they have the same exogenous arrival rate $\lambda$.

In this regular queue network, each queue will have equal stationary distributions in the steady state, therefore we only need to look at a single queue as representative of the state space of the entire network. Let $x$ be the average rate of rejection of a queue to one of its neighbors, and $dx$ be the total rejection to all of its neighbors. Let $y=\lambda+dx$ be the total arrival rate to a queue, where $\lambda$ is the exogenous arrivals and $dx$ are the rejections from its neighboring queues. We have the conservation equation,

\begin{equation}\label{eqn:consv}
    dx = y\pi_k,
\end{equation}
 where $\pi_k$ is the probability that all $k$ severs are busy.

 Combined with stationary distribution of \eqref{eqn:stationary} we have the following equations:
\begin{align}
	\left\{ \begin{array}{ll}
		\boldsymbol{\pi} \boldsymbol{Q} & = 0\\
		\sum \pi_i & = 1\\
		dx&=\pi_k (\la+dx)\end{array}\right.
\end{align}
We can write (\ref{eqn:consv}) as,
\begin{equation} \label{eqn:y}
y-\la= \frac{\frac{\rho^k}{k!}}{\sum_{i=0}^k \frac{\rho^i}{i!}} y
\end{equation}
where $\rho=\frac{y}{\mu}$. The equation in \eqref{eqn:y} is a polynomial in $y$. The next lemma states that there exists a unique solution to $y$ (and thus $x$) as long as the queues are stable:
\begin{lemma}
    \label{thm:lemma1}
 If $0 < \la < \mu k$, then (i) there is a unique and
    positive solution to $y$ in \eqref{eqn:y} and (ii) the solution is greater than $\la$. In addition, the rejection rate
    $x$ is also unique and positive.
    \label{lem:lem1}
\end{lemma}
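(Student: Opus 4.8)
The plan is to recast \eqref{eqn:y} as a fixed-point equation $f(y)=\lambda$ and study the map $f$ on $[0,\infty)$. Writing the blocking probability as $B(y)=\pi_k=\frac{\rho^k/k!}{\sum_{i=0}^{k}\rho^i/i!}$ with $\rho=y/\mu$, equation \eqref{eqn:y} reads $y-\lambda=yB(y)$, i.e. $f(y):=y\bigl(1-B(y)\bigr)=\lambda$. The quantity $f(y)$ is exactly the carried load: dividing by $\mu$, $f(y)/\mu=\rho\bigl(1-B(y)\bigr)=\sum_{i=0}^{k}i\,\pi_i=\mathbb{E}_\rho[N]$, the mean number of busy servers under the truncated-Poisson law $\pi_i=\frac{\rho^i/i!}{\sum_{j=0}^{k}\rho^j/j!}$. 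So proving the lemma amounts to showing that $f$ is a continuous, strictly increasing bijection from $(0,\infty)$ onto the interval $(0,\mu k)$.

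First I would record the easy endpoint behaviour. Since $B$ is a ratio of polynomials with strictly positive denominator for all $y\ge 0$, $f$ is continuous; at $y=0$ we have $\rho=0$, $B(0)=0$, hence $f(0)=0$. For the upper end, as $\rho\to\infty$ both partial sums are dominated by their top terms, giving $1-B(y)=\frac{\sum_{i=0}^{k-1}\rho^i/i!}{\sum_{i=0}^{k}\rho^i/i!}\sim\frac{k}{\rho}$, so that $f(y)=y\bigl(1-B(y)\bigr)\to\mu k$; equivalently $\mathbb{E}_\rho[N]\to k$. Thus the range of $f$ approaches the endpoints of $(0,\mu k)$.

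The crux, and the step I expect to be the main obstacle, is strict monotonicity of $f$. I would establish it through the probabilistic reading $f(y)=\mu\,\mathbb{E}_\rho[N]$ together with a stochastic-dominance argument for the truncated-Poisson family. For $\rho'>\rho$ the likelihood ratio $\pi_i(\rho')/\pi_i(\rho)\propto(\rho'/\rho)^i$ is increasing in $i$, so $\pi(\rho')$ dominates $\pi(\rho)$ in the monotone likelihood-ratio order and hence in the usual stochastic order; since the two laws are distinct and ordered on a support with at least two points, the expectation of the identity strictly increases, i.e. $\mathbb{E}_{\rho'}[N]>\mathbb{E}_{\rho}[N]$, making $f$ strictly increasing in $\rho$ and therefore in $y$. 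As an alternative I could differentiate directly: after clearing denominators the sign of $f'$ reduces to $S_{k-1}S_k+\rho(S_{k-2}S_k-S_{k-1}^2)>0$ with $S_m=\sum_{i=0}^{m}\rho^i/i!$, which follows from the log-concavity (Tur\'an) inequality for these truncated exponential sums; the stochastic-dominance route avoids that algebra.

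With continuity, $f(0)=0$, $\lim_{y\to\infty}f(y)=\mu k$, and strict monotonicity in hand, the intermediate value theorem gives, for every $\lambda\in(0,\mu k)$, exactly one $y>0$ with $f(y)=\lambda$, which is claim (i). Claim (ii) is then immediate: for $y>0$ we have $0<B(y)<1$, so $\lambda=y\bigl(1-B(y)\bigr)<y$, whence $y>\lambda$. Finally, recalling $y=\lambda+dx$ with $d\ge 1$ the degree of the regular graph, the unique $y$ determines $x=(y-\lambda)/d$, which is positive since $y>\lambda$ and unique since $y$ is; this settles the statement on the rejection rate.
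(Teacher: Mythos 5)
Your proof is correct, but it takes a genuinely different route from the paper. The paper (deferring details to \cite{dowling2017optimizing}) clears denominators in \eqref{eqn:y}, observes that under the hypothesis $0<\lambda<\mu k$ the resulting polynomial's coefficient sequence has exactly one sign change, and invokes Descartes' rule of signs to get exactly one positive root. You instead read \eqref{eqn:y} as the carried-load fixed-point equation $y\bigl(1-B(y)\bigr)=\lambda$ and prove that $y\mapsto y\bigl(1-B(y)\bigr)=\mu\,\mathbb{E}_\rho[N]$ is a continuous strictly increasing bijection of $(0,\infty)$ onto $(0,\mu k)$, with monotonicity supplied by the monotone-likelihood-ratio ordering of the truncated Poisson family; the carried-load identity $\rho(1-B)=\sum_i i\pi_i$ checks out, the tail asymptotic $1-B\sim k/\rho$ is right, and the stochastic-dominance step is a clean substitute for the Tur\'an-type inequality that direct differentiation would require. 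What each approach buys: Descartes' rule is shorter and purely algebraic, but it is somewhat opaque --- the single sign change has to be verified coefficient by coefficient and gives little insight into why $\lambda<\mu k$ is the right threshold. Your argument is longer but structurally more informative: it exhibits $\mu k$ as the saturation throughput, shows the stability condition is not just sufficient but necessary for a solution to exist, and makes claim (ii) a one-line consequence of $0<B(y)<1$ rather than a separate verification. Both correctly reduce the uniqueness and positivity of $x=(y-\lambda)/d$ to that of $y$.
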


The result is obtained by observing there is a single sign change in the sequence of coefficients in the polynomial \eqref{eqn:y} and applying Descartes' rule of signs. A complete proof is available in \cite{dowling2017optimizing}. This result states that as long as the total arrivals are less then the service rate times the number of spaces, we can explicitly find the rejection rates and the stationary probabilities by solving a polynomial equation.

\subsection{Irregular networks}

The totally uniform assumption does not pertain to our application, hence the need to test via simulation. But given occupancy data we show that the \emph{total} exogenous and endogenous arrivals to a queue can still be solved for and used to estimate the traffic caused by drivers searching for parking. This time, for some \emph{total} incoming rejection rate $x$, letting $y = \lambda + x$, we can estimate the endogenous proportion of incoming arrivals as the sum of the outgoing fractional rejection rates of adjacent queues. This \emph{decoupling} assumption we make in order to model an irregular network like Belltown is tested via simulation in Sec.~\ref{sec:application}.

Assuming the queueing network reaches steady state,
from the perspective of a single queue in solving \ref{eqn:little} for $\pi_0$ gives
\begin{equation}
\pi_0 \frac{\rho^{k}}{k!}  + \frac{uk\mu}{y} = 1,
\end{equation}
where $u$ is the occupancy level and $\rho=\frac{y}{\mu}$.
Rearranging terms yields a polynomial in $y$,
\begin{equation}\label{eqn:y_poly_occup_mu}
0 = \sum_{i=0}^{k} \frac{1}{\mu^{i-1}}\left[\frac{i - uk}{i!}\right]y^{k}.
\end{equation}
Again, we can characterize the solutions to \eqref{eqn:y_poly_occup_mu}
\begin{lemma}\label{lem:lem2}
If $u \in [0,1)$ and $k$ is a positive integer, then \eqref{eqn:y_poly_occup_mu} has a unique real, positive root.
\end{lemma}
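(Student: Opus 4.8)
The plan is to reuse the Descartes' rule of signs argument that established Lemma~\ref{lem:lem1}. Reading \eqref{eqn:y_poly_occup_mu} as a genuine degree-$k$ polynomial in $y$, the coefficient of $y^i$ is $c_i = \frac{1}{\mu^{i-1}}\cdot\frac{i-uk}{i!}$. Since $\mu>0$ and $i!>0$, the sign of $c_i$ is exactly the sign of $i-uk$. Because $i\mapsto i-uk$ is strictly increasing in $i$ by unit steps, the coefficient sequence $c_0,\dots,c_k$ is negative for every $i<uk$ and positive for every $i>uk$, with a single vanishing entry precisely when $uk$ is an integer. Using $u\in[0,1)$ and $k\ge 1$, the constant term is $c_0=-uk\mu\le 0$ and the leading term is $c_k=\frac{k(1-u)}{\mu^{k-1}k!}>0$, so the polynomial really has degree $k$ and its nonzero coefficients exhibit exactly one sign change.

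Next I would invoke Descartes' rule of signs directly. A single sign change in the (nonzero) coefficient sequence bounds the number of positive real roots by one, and the parity clause forces the number of positive roots to share the parity of the number of sign changes; being odd and at most one, it equals one. Hence there is exactly one positive real root counted with multiplicity, which in turn forces it to be simple. This simultaneously delivers existence and uniqueness, matching the structure of the proof of Lemma~\ref{lem:lem1}.

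Finally I would dispose of the endpoints. For $u\in(0,1)$ the argument above applies verbatim. For $u=0$ the constant term $c_0$ vanishes while every remaining coefficient is nonnegative, so $y$ factors out and the unique nonnegative real root is $y=0$, corresponding physically to zero occupancy and zero total arrival rate; I would either restrict the nondegenerate regime to $u\in(0,1)$ or interpret the statement accordingly. I expect the only real subtlety to be the bookkeeping around the possibly-zero coefficient at $i=uk$: one must confirm that deleting a single zero coefficient cannot manufacture a spurious second sign change. This is immediate here because the signs are monotone, so the sign pattern is always of the form $(-,\dots,-,[0],+,\dots,+)$, giving exactly one change whether or not the bracketed coefficient is present.
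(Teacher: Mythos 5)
Your proof takes essentially the same route as the paper, which simply invokes Descartes' rule of signs (deferring the full argument to the cited reference): you verify the single sign change in the coefficient sequence of the degree-$k$ polynomial and use the parity clause to upgrade ``at most one'' positive root to ``exactly one.'' Your observation that the endpoint $u=0$ degenerates (the constant term vanishes and the only nonnegative root is $y=0$, so the claim should really be read on $u\in(0,1)$ or with ``positive'' relaxed to ``nonnegative'') is a fair catch about the stated hypothesis, but otherwise the argument is correct and matches the paper's approach.
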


The result is obtained by application of Descartes' rule of signs. A complete proof is available in \cite{dowling2017optimizing}. The analysis of irregular networks is central to our application in Sec.~\ref{sec:application}.

This root need not be bounded, hence the restriction of the values of $u$ to the interval $[0,1)$. In order to achieve a $100\%$ occupancy, implying the probability of being full is $1$, vehicles would need to arrive constantly ($y = \infty$), immediately taking the place of any vehicle that leaves upon completion of service.

\subsection{Simulation of finite capacity queue networks}
\label{sec:convgrate}

Strong results on the stationary distribution of the $M/GI/k/n$ queue are generally not known. As service times tend $\emph{not}$ to be negative exponentially distributed, we can test an the assumption that the network can indeed reach steady state.

Motivated by the observation that parkers tend to park for the maximum allowable time at a location (see Fig.~\ref{fig:servicedist}), we simulate a regular network of 10, completely connected, $M/D/k/k$ queues where service time is deterministic and compare this to an identical network of $M/M/k/k$ queues\footnote{Number of servers per queue: 5, average (or fixed) service times per queue: 5 units time, exponential inter-arrival times: 2.0 units time and 1.2 units time for lower and higher occupancy cases respectively, transit times 0.1 units time}. Fig.~\ref{fig:convrate} illustrates the rates of convergence to an identical level of occupancy at steady-state, where exponential service takes a few time steps longer.

\begin{figure}
    \centering
    \includegraphics[width=0.9\columnwidth]{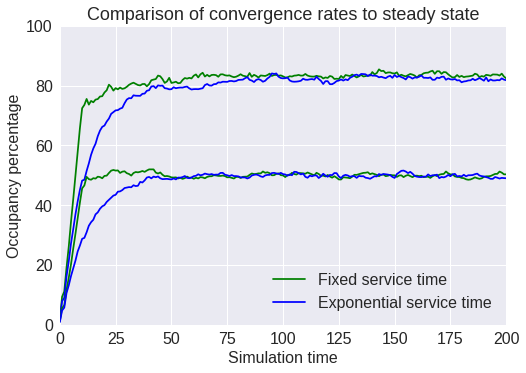}
    \caption{Rate of convergence to steady-state occupancy for a 10-node queue network }
    \label{fig:convrate}
\end{figure}

With 5 servers each with an average service time of 5, we have an effective service rate 1 and the system is stable for exogenous arrival rates below 1 by Lemma \ref{lemma:capacity} (identical across queues for a regular network). For exponential exogenous interarrival times between 4.0 and 1.05 units time, we compare the occupancy, probability the queue is full, and resulting rejection rate between exponential and fixed service times.

\begin{figure}[h]
    \begin{subfigure}[t]{0.45\textwidth}
        \centering
		\includegraphics[width=0.9\textwidth]{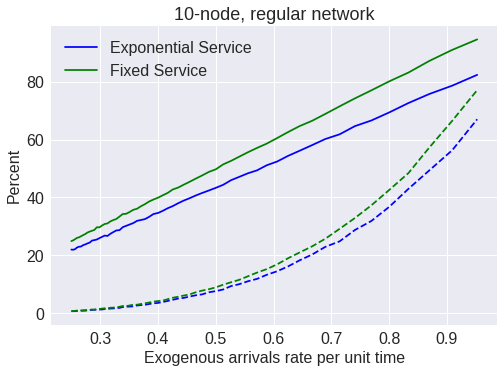}
        \caption{}
        \label{fig:varyoccup}
    \end{subfigure}\hfill
    \begin{subfigure}[t]{0.45\textwidth}
        \centering
        \includegraphics[width=0.9\textwidth]{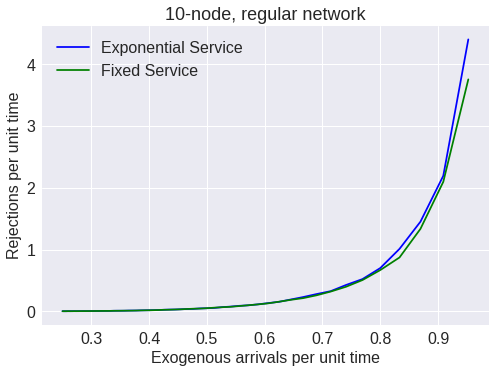}
        \caption{}
        \label{fig:varyrej}
	\end{subfigure}
	\caption{(a) Occupancy (solid) and probability a queue in the network is full (dashed) for increasing arrival rates in a regular 10-node queue network (b) Rejection rate of tasks are unable to find service at queues for increasing arrival rates in a regular 10-node queue network}
	\label{fig:varyarrival}
\end{figure}

Based on Fig.~\ref{fig:varyarrival}, using a Markovian relaxation (assuming negative exponential service time distributions) to solve for metrics of interest in networks of queues with finite capacity (e.g. occupancy, rate of task rejection), provides similar results to a network under deterministic service times. The largest observable difference in regular networks at steady state appears to be the occupancy, as illustrated in Fig.~\ref{fig:varyoccup}, suggesting that probability mass across queue states for an individual queue in the network is skewed towards the block-face being full. Nevertheless, comparing the probability that the block-face is full---the driving state for rejections to occur---are most similar at lower arrival rates, diverging slightly near saturating arrival rates. Rejection rates in simulated regular networks appear to be virtually identical (see Fig.~\ref{fig:varyrej}.

With these empirical results, our model estimates of the probability a block-face is full and the resulting rate of drivers rejected using a Markovian service time solution may not be unfounded for deterministic service rates, though more work will be required to test a mixed service rate distribution closer to the true distribution of paid parking times in Fig.~\ref{fig:servicedist}.

The remaining challenge then is to test the Markovian relaxation achieves similar results to a fixed service time regime in irregular networks. We do so on the network of block-faces representing paid parking in Belltown in the next section.

\section{Application}
\label{sec:application}
Searching for parking presents a challenging task in urban districts around the world. Drivers in dense urban areas frequently find that desirable parking close to their destination is unavailable or prohibitively expensive. As a result, the act of \emph{cruising for parking} can arise from any number of situations: desirable parking near a destination being at capacity, price differences between public curbside parking and private garage parking\footnote{The discrepancy between curbside parking and off-street parking can be significant. For example, in some areas of Seattle, parking in a garage costs upwards of \$9/hour compared to the roughly \$2/hour cost of on-street parking~\cite{offstreet:2014aa}. In addition to price discrimination caused by time-dependent fees, an entire day in a garage is approximately \$30. }, or simply a driver's lack of familiarity with their surroundings. Studies have suggested that a majority of drivers spend anywhere between 3.5 to 14 minutes in a typical search~\cite{shoup:2006aa}. These times quickly add up to cause significant productivity losses in cities. For example, a single 15 block district of Los Angeles services over 8,000 cars in day, which leads to 470 to 1870 hours of lost time looking for parking~\cite{shoup:2007aa}. 

\begin{figure}[h]
    \centering
    \includegraphics[width=0.9\columnwidth]{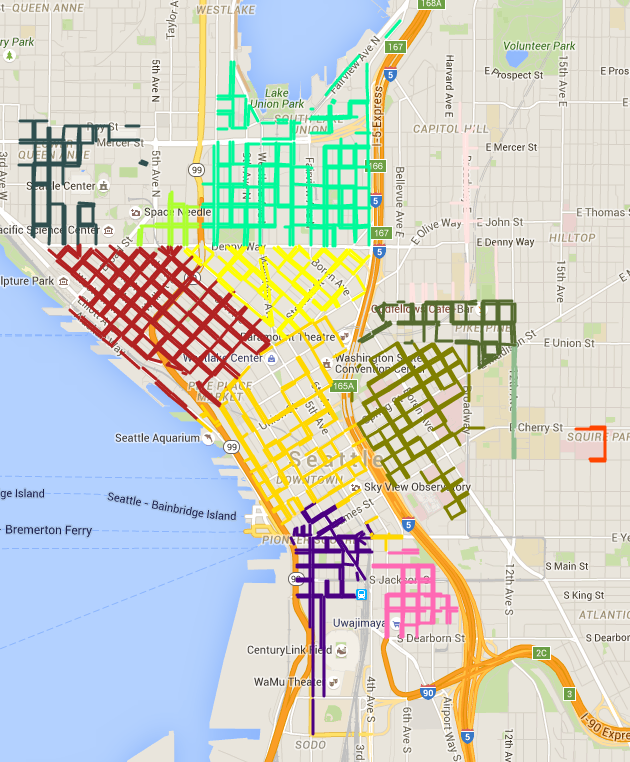}
    \caption{A map of all metered curbside parking in the CBD of Seattle, Washington, USA. Belltown is in dark red along the waterfront. Map data \copyright 2018 Google}
    \label{fig:map}
\end{figure}

We use paid parking transaction data provided by SDOT, collected at digitial parking meters. These types of data are becoming widely available in many cities around the United States. Recent initiatives---LA Express Park in Los Angeles~\cite{ladot:2017aa} and SFpark in San Francisco~\cite{sfmta:2017aa}, for example---are providing both city planners and researchers with a wealth of new data. SFpark is a now concluded pilot study that evaluated the effectiveness of spatially and temporally adjusted pricing for on-street and off-street parking\footnote{The pilot study was conducted on approximately 25\% of SF's smart meters and due to its success, the program will be rolled out across SF.}.  Similarly, LA Express Park is an ongoing program that utilizes smart technologies and demand-based pricing to manage parking in downtown LA.

\subsection{Data}
\label{sec:data}

We utilize on-street paid parking transaction data collected from March 1$^{st}$, 2016 through July 31$^{st}$, 2016 by the SDOT to inform our model. The paid parking transaction data includes both pay-station and pay-by-phone records at a block-face level spatial granularity. In Belltown, there is a total of 256 block-faces across the neighborhood each with a number spaces range of one to 20 parking spaces (distributed as Fig.~\ref{fig:supplydist}). Spaces are not demarcated, as parking is paid for at a digital meter and a permit is displayed in the vehicle's passenger window. To estimate supply, SDOT divides the length of the legal parking zone along the block-face into 25 foot sections.

\begin{figure}[h]
    \begin{subfigure}[t]{0.45\textwidth}
        \centering
		\includegraphics[width=0.9\textwidth]{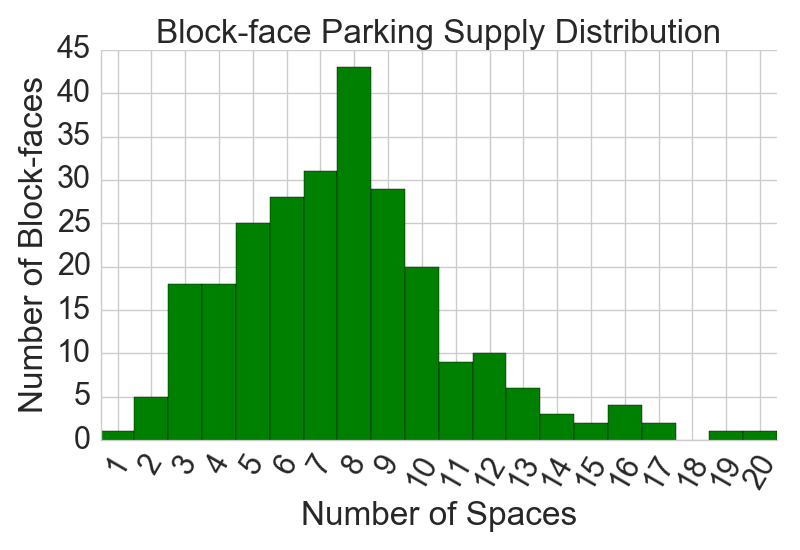}
        \caption{}
        \label{fig:supplydist}
    \end{subfigure}\hfill
    \begin{subfigure}[t]{0.45\textwidth}
        \centering
        \includegraphics[width=0.9\textwidth]{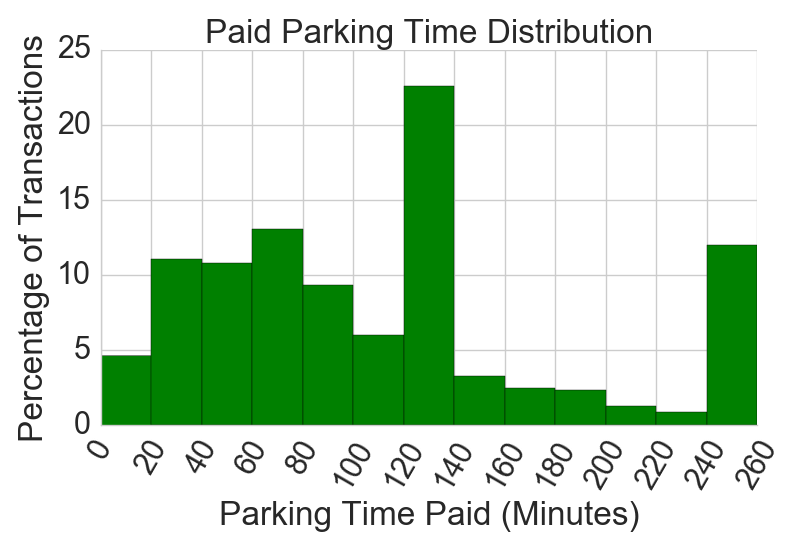}
        \caption{}
        \label{fig:servicedist}
	\end{subfigure}
	\caption{(a) Distribution of parking spaces per block-face in Belltown. (b)
		Distribution of Paid Parking Time.}
	\label{fig:margcost}
\end{figure}

Paid parking is active from 8 \textsc{AM}--8 \textsc{PM}, Monday through Saturday. As an exception to this, there are a select number of block-faces along downtown arterials in which no paid parking is allowed during portions of the morning and evening commutes to allow for more roadway capacity and for buses to stop. The pricing model for each block-face includes four separate rate intervals: 8 \textsc{AM} - 11 \textsc{PM} weekday, 11 \textsc{AM}--8 \textsc{PM} weekday, 8 \textsc{AM}--11 \textsc{PM} Saturday, and 11 \textsc{AM}--8 \textsc{PM} Saturday. Prices range between \$1.50 - \$2.50 per hour. The time limit for paid parking is two or four hours depending on location. From our data we observe that drivers typically park for the maximum allotted time allowed whether the limit is two or four hours (see Fig.~\ref{fig:servicedist}).

We measure occupancy by counting the number of spaces paid for at each block-face at each minute. We then convert the number of paid spots at each minute to a \emph{load} which is defined to be the number of spaces paid for at a block-face divided by the supply of the block-face, as estimated by SDOT. We then aggregate the loads to an hourly resolution as hourly occupancy is a typical performance metric. These loads do not give the true occupancy due to several categories of vehicles which may park curbside for free (e.g., disabled placard holders, government vehicles, car-sharing services). Further, the load can be greater than 1; this is the result of 1) cars leaving before their paid time is expired, and 2) SDOT's estimated 25 feet of parking space per vehicle being too large for small compact cars and motorcycles.

\subsection{Model solution for rejection rates from occupancy data}

The procedure for determining rejection rates from occupancy data in Belltown is described at length in \cite{dowling2017optimizing} and our code and data are available in our GitHub repository. In short, using Eqn.~\eqref{eqn:y_poly_occup_mu} and occupancy measures at each block-face, we solve for the arrival rates that attain those occupancy levels. We solve for the probability that the block-face is full in the Markovian service case, and use this to estimate the rate at which drivers are rejected. We simulate to compare these model estimates to rejection rates achieved with both exponential and fixed service times parameterized by the mean paid parking time at each block-face.

Amongst the strongest assumptions we make is that the \emph{exogenous} arrival process to the network is Poisson. We can test the validity of this assumption by looking at the inter-transaction times at block-faces across the network. These arrivals would constitute a subset of the total arrivals to the block-face as the arrival rate of drivers who \emph{were} able to find parking. Fig.~\ref{fig:interarrivalrate} illustrates an example exponential fit of inter-transaction times at a block-face within the Belltown network.

\begin{figure}[h]
	\centering
	\includegraphics[width=0.75\columnwidth]{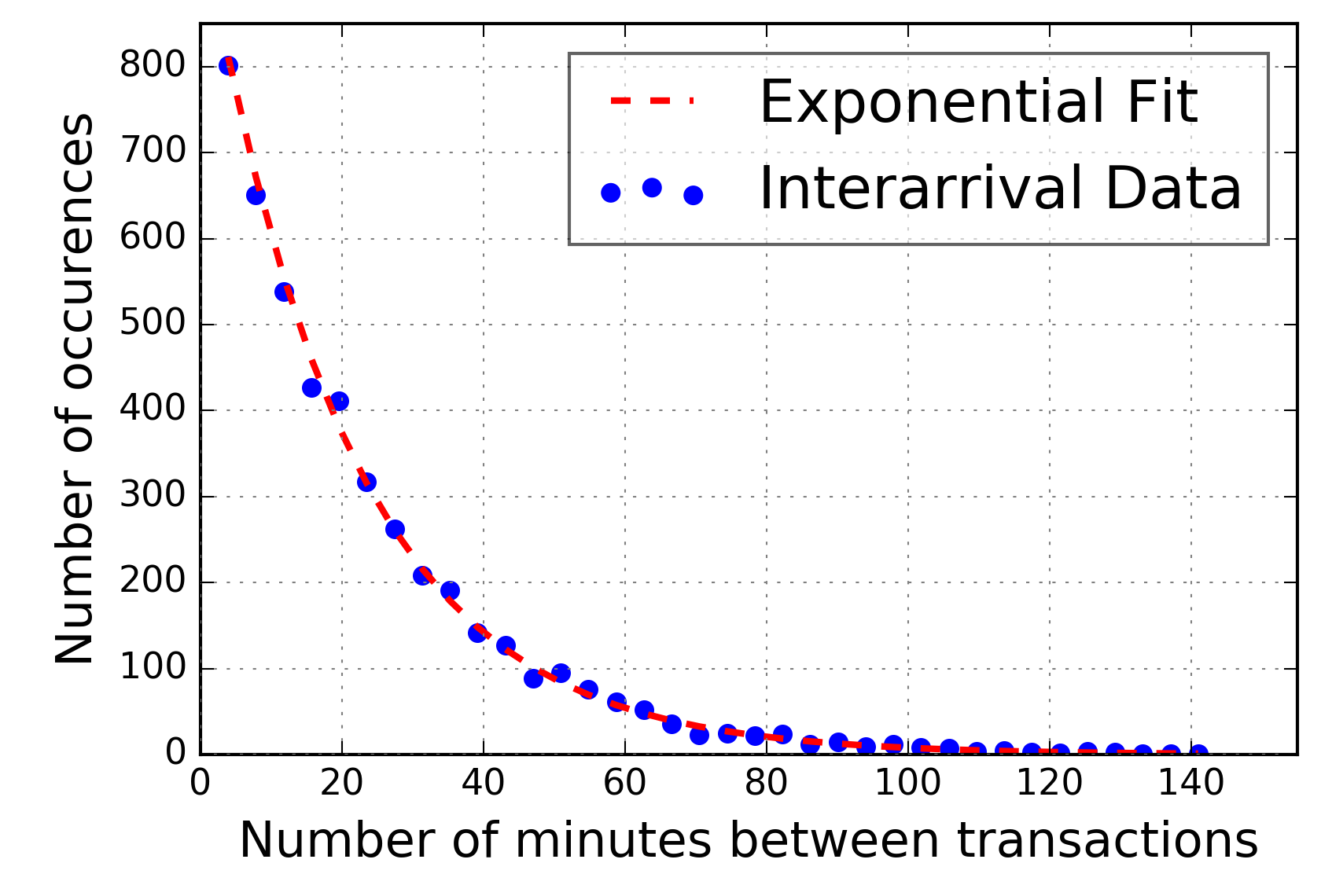}
	\caption{Exponential distribution of inter-transaction times at a block-face in Belltown over the course of January to June 2016.}
	\label{fig:interarrivalrate}
	\vspace{-10pt}
\end{figure}

A second key assumption in our application is that neighboring block-faces see similar levels of occupancy, and a growing body of work continues to justify this assumption that block-faces are spatially dependent upon one another \cite{fiez:2017aa}. In intuitive terms, it is unreasonable to expect that an 0\% occupancy block-face is immediately adjacent to a block-face that has larger than 90\% occupancy over some extended time period. 

We can evaluate this assumption by simulating such a network of finite capacity queues with arrival rate parameters learned from occupancy data. We can then corroborate the resulting 1) simulated occupancy level and 2) rejection rates between block-faces.

\subsection{Monte Carlo simulation of rate parameters}
\label{sec:simapp}

We simulate for each paid parking hour of each day the average occupancy observed in the time range of March 1$^{st}$ to July 30$^{th}$ to test the validity of our decoupling assumption in irregular networks. We measure for both fixed and exponential service times if 1) the exogenous arrival rates estimated by our model achieve the occupancies that we see in data and 2) the rejection rates of vehicles searching for parking is comparable to model estimates.

Simulation parameters are set according to the number of parking spaces, connectivity of the block-faces in Belltown, median paid parking time per block, and the hourly exogenous arrival rate estimated from occupancy by our above model. We average the results (per-block occupancies and rejection rates) of 100 simulations per day.

We treat each hour as if it were in steady state for much longer period of time, in order to obverse steady state values like the arrival rate, rejection rate, and occupancies. Therefore we allow the simulator to run over a much longer time horizon---1000 minutes---rather than the single hour that parameterizes its values.

\subsubsection{Exponential service}

\begin{figure}
    \centering
    \includegraphics[width=0.9\columnwidth]{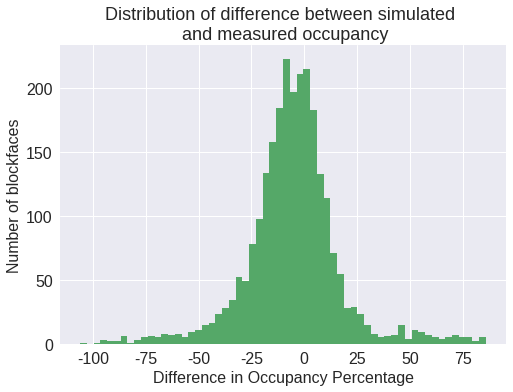}
    \caption{Distribution of differences between simulated occupancy and hourly occupancy from data for all days and hours}
    \label{fig:allsimerror}
\end{figure}

Here we expressly look at exponential service times in simulation to investigate the error incurred by solving for the rates that achieve the observed occupancy at each block-face individually. Fig.~\ref{fig:allsimerror} is a histogram of the differences between the simulated occupancy under exponential service and the observed occupancy in data. We achieve an average error of -5.3\% and a standard deviation of $\pm$22.3\%.

We observe a negative mean likely due to our conservative method of estimation of the total arrival rates to a block-face. The queue network model yeilds an asymptotic relationship between occupancy and arrival rate $y$ achieving that occupancy.  Because of the monotonocity of this relationship \cite{dowling2017optimizing}, when finding the root of Eqn.~\eqref{eqn:y_poly_occup_mu} our implementation performs a brute force search in increasing $y$ subject to a tolerance parameter on the occupancy $u$. When estimating $y$ as a function of $u$ we hard thresholded occupancy $u$ at 99\%. 

\begin{figure}
    \centering
    \includegraphics[width=0.9\columnwidth]{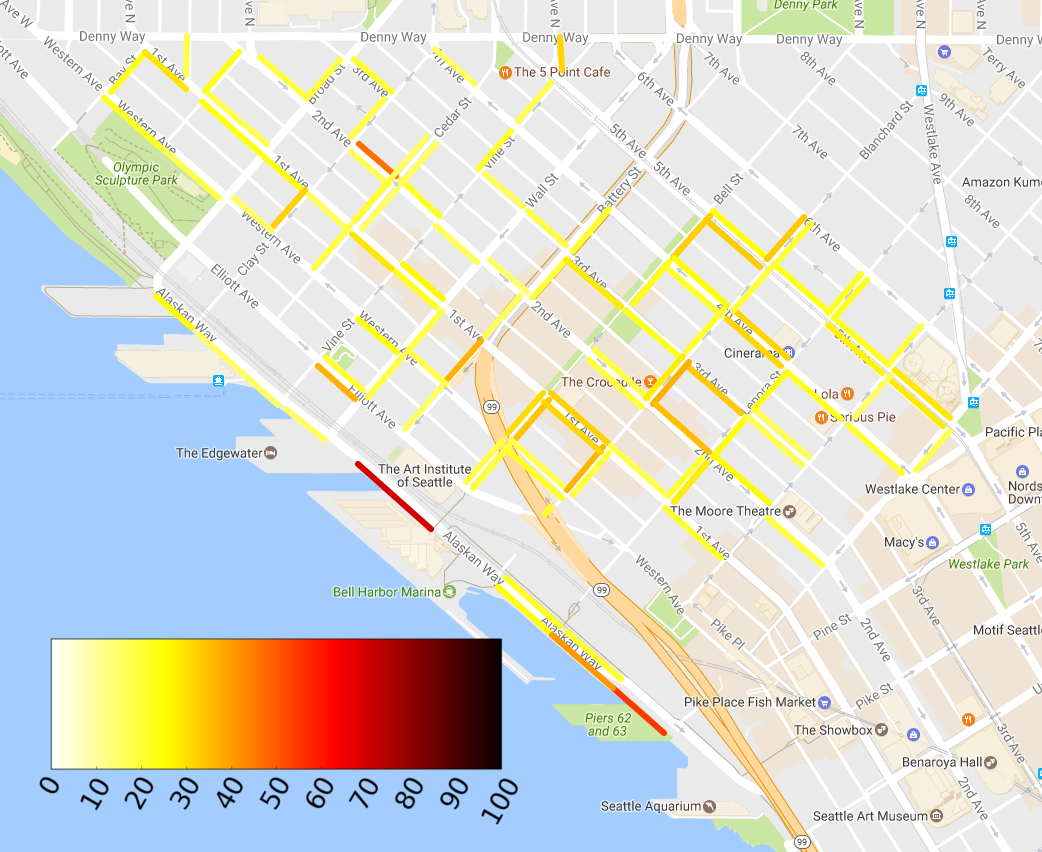}
    \caption{Locations of average occupancy error between data and simulation across weekdays, March --- June 2016}
    \label{fig:simerrorloc}
\end{figure}

Block-faces exhibiting the most error were consistent across time and space, suggesting the model is efficient for normal cases. Indeed, the top three outliers---block-faces 91, 185, and 196\footnote{See \url{github.com/cpatdowling/net-queue/data/simulation/belltownsims/belltowndata/data_notes.txt} for additional information}---showed consistent discrepancy between simulated and observed occupancy due to, we believe, the following reasons: 1) block-face 91 (farthest east red block-face along the waterfront in Fig.~\ref{fig:simerrorloc}) had the minimum mean paid parking time of Q2 2016 in Belltown (48 minutes vs 109 minutes overall average, noting the strong tendency toward paying for the maximum parking time as observed in Fig.~\ref{fig:servicedist}, farthest west red block-face along the waterfront in Fig.~\ref{fig:simerrorloc}), 2) block-face 185 is on the boundary of an isolated section of the network, connected to only one other block-face, and 3) block-face 196 is the only block-face in Belltown with 1 space of paid parking (see Fig.~\ref{fig:supplydist}, farthest north red block in Fig.~\ref{fig:simerrorloc}). 

\begin{figure}
    \centering
    \includegraphics[width=0.9\columnwidth]{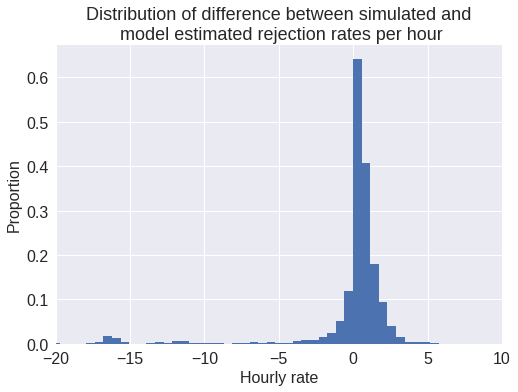}
    \caption{Distribution of differences between simulated rejection rates and model estimated rejection rates from data for all days and hours}
    \label{fig:ratehist}
\end{figure}

Over all days, we observed an average of error of -0.19 vehicles per hour in rejection rates (between model estimate and simulated) with a standard deviation of $\pm$4 vehicles per hour across block-faces for which rejections were detectable at rates less than the simulation time horizon (1000 minutes). This would suggest that, according to simulation or average model estimate of rejected vehicles circulating in Belltown looking for parking is conservative. Fig.~\ref{fig:ratehist}, however, illustrates the differences at all days and hours; the opposite may be the case, having a median of 0.46 vehicles per hour. Block-faces 0, 76, and 90 were consistently underestimated by a wide margin (less than -15 vehicles per hour in some cases), but for potential reasons less clear than error observed in simulated occupancy. The maximum \emph{over-estimation} was roughly 5.7 vehicles per hour.

\subsubsection{Fixed Service}

For the fixed service case, we achieve similar results. Here we make a direct comparison with exponential service on a typical Monday during March --- June 2016, measuring again the simulated occupancy error from the true measured occupancy, illustrated in Fig.~\ref{fig:mondayerror}. Across all days and times, we achieves a slightly lower, -5.3\% average error in occupancy and a standard deviation of $\pm$21.2\%, with an identical set of the top three outlier block-faces in terms of error.

\begin{figure}
    \centering
    \includegraphics[width=0.9\columnwidth]{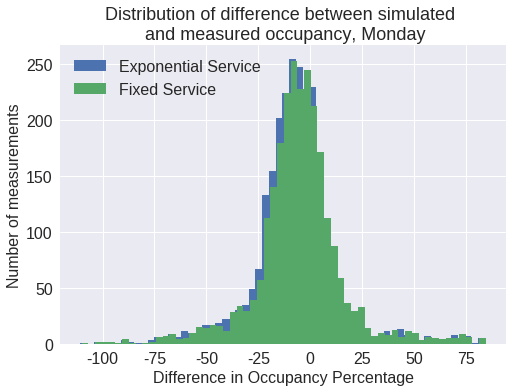}
    \caption{Distribution of differences between simulated occupancy and true occupancy data for Monday, all hours, for both fixed and exponential service times}
    \label{fig:mondayerror}
\end{figure}

Critical to the utility of this network queue model is the ability to estimate the rate at which vehicles are rejected by a full block-face, as these are the vehicles contributing to through-traffic cruising for an available space. Again, restricting our comparison between exponential and fixed service times to Monday, we again achieve similar results, illustrated in Fig.~\ref{fig:rateerrorMonday}. The fixed service time simulation exhibits a slightly higher mean rejection rate error on Monday than exponential service (0.38 vehicles per hour vs 0.34 vehicles per hour); and again, outliers (maximum rate errors of 3.26 (fixed service) and 3.28 (exponential service)) were due to the same boundary case block-faces.

\begin{figure}
    \centering
    \includegraphics[width=0.9\columnwidth]{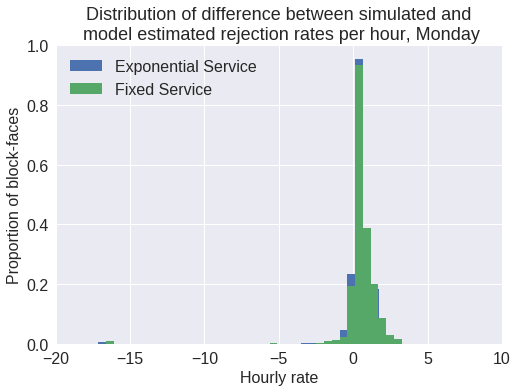}
    \caption{Distribution of differences between simulated rejection rates and model estimated rejection rates from data for Monday, all hours, for both fixed and exponential simulated service times}
    \label{fig:rateerrorMonday}
\end{figure}

\section{Conclusion}
\label{sec:conc}
In sum, we have investigated a number of assumptions necessary to model curbside parking as a network of finite capacity queues, demonstrating a new model for urban parking that considers spatial and temporal heterogeneity. This model provides a means for city planners and traffic engineers to analyze a network of curbside parking at a much higher resolution that previously possible. These results help provide a basis for future work designing and exploring new parking scheduling and pricing regimes to minimize delays caused by cruising for parking while leaving open a flexible resource to drivers and businesses.

\bibliographystyle{IEEEtran}
\bibliography{main} 

\end{document}